\numberwithin{equation}{section}
\theoremstyle{plain}
\newtheorem{theorem}{Theorem}[section]
\newtheorem{corollary}{Corollary}[section]
\theoremstyle{remark}
\newtheorem{remark}{Remark}[section]
\newcommand{\Area}{\operatorname{Area}}
\newcommand{\clos}{\operatorname{clos}}
\newcommand{\const}{\operatorname{const}}
\newcommand{\grad}{\operatorname{grad}}
\newcommand{\mbb}{\mathbb}
\newcommand{\supp}{\operatorname{supp}}
\newcommand{\tu}{\textup}
\renewcommand{\[}{\left[ }
\renewcommand{\]}{\right] }
\renewcommand{\(}{\left( }
\renewcommand{\)}{\right) }
\title{Gravitational Lensing by Elliptical Galaxies, and the Schwarz Function}
\author[Fassnacht]{C. D. Fassnacht}
\address{Department of Physics \\
University of California at Davis \\
Davis, CA 95616}
\email{fassnacht@physics.ucdavis.edu}
\author[Keeton]{C. R.  Keeton}
\address{Department of Physics and Astronomy\\
Rutgers University \\
Piscataway, NJ 08854-8019} \email{keeton@physics.rutgers.edu}
\author[Khavinson]{D. Khavinson}
\address{Department of Mathematics \& Statistics \\
University of South Florida \\
Tampa, FL 33620-5700}
\email{dkhavins@cas.usf.edu}
\thanks{The third author gratefully acknowledges partial support from the National Science Foundation under the grant DMS-0701873. The first and third
authors are also grateful to Kavli Institute of Theoretical Physics
for the partial support of their visit there in 10/2006 under the
NSF grant PHY05-51164.}
\date{}
\begin{document}
\maketitle

\begin{abstract}
We discuss gravitational lensing by elliptical galaxies with some
particular mass distributions. Using simple techniques from the
theory of quadrature domains and the Schwarz function (cf.\
\cite{Sh}) we show that when the mass density is constant on
confocal ellipses, the total number of lensed images of a point
source cannot exceed $5$ ($4$ bright images and $1$ dim image).
Also, using the Dive--Nikliborc converse of the celebrated Newton's
theorem concerning the potentials of ellipsoids, we show that
``Einstein rings'' must always be either circles (in the absence of
a tidal shear), or ellipses.
\end{abstract}

\section{Basics of gravitational lensing}

Imagine $n$ co-planar point-masses (e.g., condensed galaxies, stars,
black holes) that lie in one plane, the lens plane. Consider a point
light source $S$ (a star, a quasar, etc.) in a plane (a source
plane) parallel to the lens plane and perpendicular to the line of
sight from the observer, so that the lens plane is between the
observer and the light source. Due to deflection of light by masses
multiple images $S_1,S_2,\dotsc$ of the source may form (cf.\  Fig.\
1). Fig. 2 and Fig. 3 illustrate some further aspects of the lensing
phenomenon.

\begin{figure}[H]
\begin{center}
\includegraphics*[scale=.5]{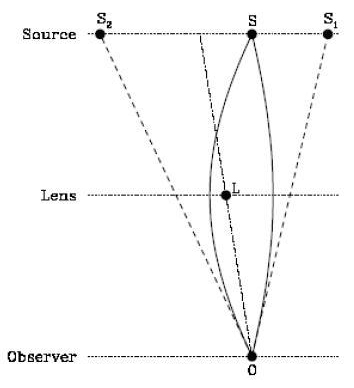}
\end{center}
\caption{The lens $L$ located between source $S$ and observer $O$
produces two images $S_1, S_2$ of the source $S$.}
\end{figure}

\begin{figure}[H]
\begin{center}
\includegraphics*[scale=.5]{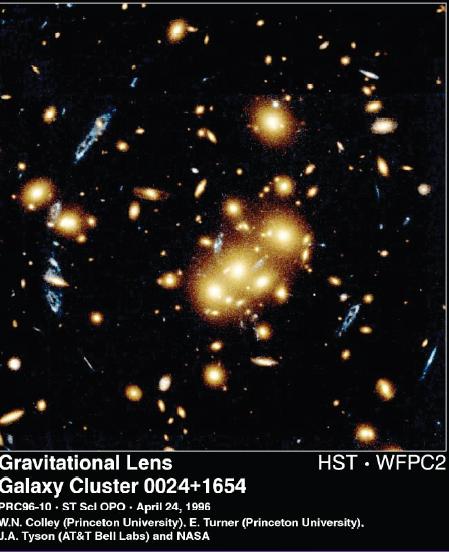}
\end{center}
\caption{Lensing of a galaxy by a cluster of galaxies; the blue
spots are all images of a single galaxy located behind the huge
cluster of galaxies. (Credit: NASA, W. N. Cooley (Princeton), E.
Turrer (Princeton) and J. A. Tyson (AT\&T and Bell Labs).)}
\end{figure}

\begin{figure}[H]
\begin{center}
\includegraphics*[scale=.5]{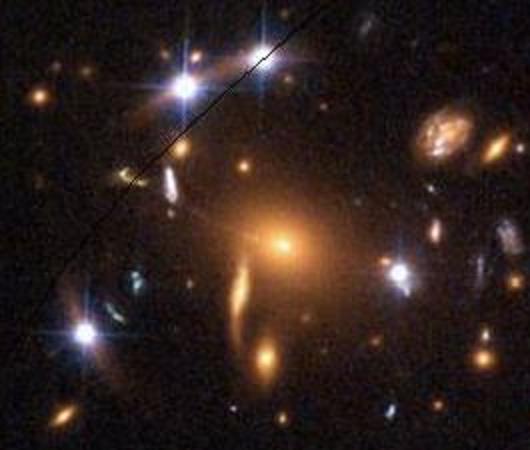}
\end{center}
\caption{The bluish bright spots are the lensed images of a quasar
(i.e., a quasi-stellar object) behind a bright galaxy in the center.
There are 5 images (4 bright + 1 dim), but one cannot really see the
dim image in this figure. (Credit: ESA, NASA, K. Sharon (Tel Aviv
University) and E. Ofek (Caltech).)}
\end{figure}

\section{Lens equation}

In this section we are still assuming that our lens consists of $n$
point masses. Suppose that the light source is located in the
position $w$ (a complex number) in the source plane. Then, the
lensed image is located at $z$ in the source plane while the masses
of the lens $L$ are located at the positions $z_j$, $j=1,\dotsc,n$
in the lens plane. The following simple equation, obtained by
combining Fermat's Principle of Geometric Optics together with basic
equations of General Relativity, connects then the positions of the
lensed images, the source and the positions of the masses which
cause the lensing effect
\begin{equation}
\label{eq2.1}
w=z-\sum_1^n\frac{\sigma_j}{\bar{z}-\bar{z}_j},
\end{equation}
where $\sigma_j$ are some physical (real) constants. For more
details on the derivation and history of the lensing equation
\eqref{eq2.1} we refer the reader to \cite{St}, \cite{NB},
\cite{PLW}, \cite{Wa}. Sometimes, to include the effect caused by an
extra (``tidal'') gravitational pull by an object (such as a galaxy)
far away from the lens masses, the right-hand side of \eqref{eq2.1}
includes an extra linear term $\gamma\bar{z}$, thus becoming
\begin{equation}
\label{eq2.2}
w=z-\sum_1^n\frac{\sigma_j}{\bar{z}-\bar{z}_j}-\gamma\bar{z},
\end{equation}
where $\gamma$ is a real constant. The right-hand side of
\eqref{eq2.1} or \eqref{eq2.2} is called the lensing map. The number
of solutions $z$ of \eqref{eq2.1} (or \eqref{eq2.2}) is precisely
the number of images of the source $w$ generated by the lens $L$.
Letting $r(z)=\sum\limits_1^n\frac{\sigma_j}{z-z_j}+\gamma
z+\bar{w}$, the lens equations \eqref{eq2.1} and \eqref{eq2.2}
become
\begin{equation}
\label{eq2.3}
z-\overline{r(z)}=0,
\end{equation}
where $r(z)$ is a rational function with poles at $z_j$, $j=1,\dotsc,n$ and infinity if $\gamma\ne 0$.

\section{Historical remarks}

The first calculations of the deflection angle by a point mass lens,
based on Newton's corpuscular theory of light and the Law of
Gravity, go back to H. Cavendish and J. Michel (1784), and P.
Laplace (1796)--- cf.\ \cite{Tu}. J. Soldner (1804) --- cf.\
\cite{Wa} is usually credited with the first published calculations
of the deflection angle and, accordingly, with that of the lensing
effect. Since Soldner's calculations were based on Newtonian
mechanics they were off by a factor of $2$. A. Einstein is usually
given credit for calculating the lensing effect in the case of $n=1$
(one mass lens) around 1933. Yet, some evidence has surfaced
recently that he did some of these calculations earlier, around 1912
--- cf.\  \cite{Sa} and references therein. The recent outburst of
activity in the area of lensing is often attributed to dramatic
improvements of optics technology that make it possible to check
many calculations and predictions by direct visualization.

H. Witt \cite{Wit} showed by a direct calculation that for $n>1$ the
maximum number of observed images is $\le n^2+1$. Note that this
estimate can also be derived from the well known Bezout theorem in
algebraic geometry (cf. \cite{KN1,KN2,GH,Wil}). In \cite{MPW} S. Mao
and A. O. Petters and  H. J. Witt showed that the maximum possible
number of images produced by an $n$-lens is at least $3n+1$. A. O.
Petters in \cite{Pe}, using Morse's theory, obtained a number of
estimates for the number of images produced by a non-planar lens. S.
H. Rhie \cite{Rh1} conjectured that the upper-bound for the number
of lensed images for an $n$-lens is $5n-5$. Moreover, she showed in
\cite{Rh2} that this bound is attained for every $n>1$ and, hence,
is sharp. Rhie's conjecture was proved in full in \cite{KN1}.
Namely, we have the following result.

\begin{theorem}\label{thm3.1}
The number of lensed images by an $n$-mass, $n>1$, planar lens cannot exceed $5n-5$ and this bound is sharp \cite{Rh2}. Moreover, the number of images is an even number when $n$ is odd and odd where $n$ is even.
\end{theorem}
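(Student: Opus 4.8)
The plan is to recast the lens equation \eqref{eq2.3} as a problem about the fixed points of an antiholomorphic map and then invoke a rational-harmonic-function argument. First I would write $z = \overline{r(z)}$, take complex conjugates, and observe that any solution satisfies $\bar z = r(z)$, hence $z$ is a fixed point of the antiholomorphic map $z \mapsto \overline{r(z)}$, equivalently a zero of the harmonic function $h(z) = \overline{r(z)} - z$. Here $r$ is rational of degree $n$ (the $n$ simple poles at the $z_j$, plus the behavior at infinity governed by $\gamma$). The key structural tool is the theorem of Khavinson–Neumann on harmonic rational functions: a function of the form $\overline{r(z)} - z$ with $r$ rational of degree $n \ge 2$ has at most $5n - 5$ zeros. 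So the first main step is simply to quote this and check that the degree of $r$ is exactly $n$ in both cases \eqref{eq2.1} and \eqref{eq2.2} (when $\gamma \neq 0$, infinity is a pole, but the count is arranged so the bound still reads $5n-5$; when $\gamma = 0$, $r$ is rational of degree $n$ with a zero at infinity).

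The second main step is the parity statement. The idea is an argument via the argument principle / degree theory for harmonic maps: each zero $z_0$ of $h$ that is non-degenerate carries an index $\pm 1$ according to the sign of the Jacobian of the map $z \mapsto z - \overline{r(z)}$, which equals $1 - |r'(z_0)|^2$. The sum of these indices is a topological invariant computable from the behavior of $h$ near infinity — essentially the winding number of $h$ on a large circle. One computes that winding number from the dominant term of $r$ at infinity: if $n$ is even the leading behavior forces the total index to be odd, and if $n$ is odd it forces it to be even; since the number of images has the same parity as the total index (degenerate images can be perturbed away generically, or handled by a limiting argument), the parity claim follows. So the steps in order are: (i) reduce \eqref{eq2.3} to zeros of $h(z)=\overline{r(z)}-z$; (ii) confirm $\deg r = n$; (iii) apply the $5n-5$ bound from \cite{KN1}; (iv) cite \cite{Rh2} for sharpness; (v) run the winding-number/index computation at infinity to get parity.

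**The main obstacle.** The genuinely hard part is step (iii) — the proof that a harmonic rational function of degree $n$ has at most $5n-5$ zeros — but that is exactly the content of \cite{KN1} which we are entitled to cite, so in the present exposition it is not reproved. Within what we actually do here, the delicate point is the parity argument: one must be careful that the global index count is insensitive to degenerate zeros (zeros where $|r'|=1$), and that the contribution at infinity is computed correctly depending on whether $\gamma=0$ or $\gamma \neq 0$, since the point at infinity is a pole in one case and a regular point in the other. Handling the case $\gamma=0$ (where the leading term of $r(z)$ at infinity is $\sum \sigma_j / z \sim (\sum\sigma_j)/z$, so $h(z) \sim -z$ there) versus $\gamma \neq 0$ (where $h(z) \sim -z + \gamma \bar z + \dots$) requires separate but routine winding-number bookkeeping, and this is where I would expect to spend most of the care.
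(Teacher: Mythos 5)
Your handling of the bound itself matches the paper, which likewise does not re-prove it: the estimate $5n-5$ is quoted from \cite{KN1} and sharpness from \cite{Rh2}, with only the remark that the proof rests on ideas from complex dynamics. Two points in your sketch, however, are genuinely off. First, the side claim that for $\gamma\neq 0$ ``the count is arranged so the bound still reads $5n-5$'' is incorrect: with shear, $r(z)=\sum\sigma_j/(z-z_j)+\gamma z+\bar w$ has degree $n+1$ (a pole at infinity), and the Khavinson--Neumann theorem then gives $5(n+1)-5=5n$, which is exactly the figure the paper itself states in Section 4 for the sheared case. Theorem \ref{thm3.1} concerns the shear-free equation \eqref{eq2.1}, and your argument should simply fix $\gamma=0$ rather than try to absorb the shear case into the same bound.

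Second, and more importantly, the parity argument as you describe it would not work. For $\gamma=0$ the function $h(z)=\overline{r(z)}-z$ behaves like $-z$ near infinity \emph{independently of $n$}, so the winding of $h$ on a large circle is always $+1$; the ``dominant term of $r$ at infinity'' cannot force the total index to be odd or even according to the parity of $n$. The $n$-dependence enters through the $n$ simple poles of $\overline{r}$ at the mass positions: near $z_j$ one has $h(z)\approx \sigma_j/(\bar z-\bar z_j)$, whose local winding is $+1$, so the generalized argument principle for harmonic functions reads $1=(n_+-n_-)+n$, i.e.\ $n_+-n_-=1-n$, and hence the total number of isolated, non-degenerate images is $N=n_++n_-=2n_-+1-n\equiv n+1\pmod 2$: even when $n$ is odd, odd when $n$ is even, as claimed. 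Your sketch omits the pole contributions entirely, and without them the index computation yields no parity information at all (your Jacobian formula $1-|r'(z_0)|^2$ and the remark about perturbing away degenerate zeros are fine). This is the same mechanism the paper uses for Burke's theorem in Section 4, where the absence of poles is precisely why the answer there is $n_+-n_-=1$ and the image count is always odd.
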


The proof of the above result rests on some simple ideas from complex dynamics (cf. \cite{KS,KN2}).

\section{``Thin'' lenses with continuous mass distributions}

If we to replace point masses by a general, real-valued mass
distribution $\mu$, a compactly supported Borel measure in the lens
plane, the lens equation with shear \eqref{eq2.2} becomes
\begin{equation}
\label{eq4.1}
w=z-\int_\Omega\frac{d\mu(\zeta)}{z-\zeta}-\gamma\bar{z}.
\end{equation}
Here $\Omega$ is a bounded domain containing the support of $\mu$.
The case of the atomic measure
$\mu=\sum\limits_1^n\sigma_j\delta_{z_j}$, $\sigma_i\in\mbb{R}$ is
covered by Theorem \ref{thm3.1}. Also, as noted in \cite{KN1}, if we
replace $n$-point-masses by $n$ non-overlapping radially symmetric
masses, the total number of images outside of the region occupied by
$n$-masses is still $5n-5$ when $\gamma=0$, and $\le5n$ when
$\gamma\ne 0$. The reason for that, of course, is that the Cauchy
integral
$$
\int_{\left|\zeta-z_j\right|<R}\frac{d\mu(\zeta)}{z-\zeta},
\quad\left|z-z_j\right|>R
$$
for any radially symmetric measure $\mu=\mu\(\left|\zeta-z_j\right|\)$ is immediately calculated to be equal $\frac c{z-z_j}$, where $c$ is the total mass $\mu$ of the disk $\left\{\zeta:\left|\zeta-z_j\right|<R\right\}$, hence reducing this new situation to the one treated in Theorem \ref{thm3.1}.

Here is another situation that can be treated with help from Theorem \ref{thm3.1}.

Recall that a simply-connected domain $\Omega$ is called a
\textit{quadrature domain} (of order $n$) if $\Omega$ is obtained
from the unit disk $\mbb{D}:=\{z:|z|<1\}$ via a conformal mapping
$\varphi$ that is a rational function of degree $n$,
$\Omega=\varphi(\mbb{D})$. Of course, all poles $\beta_j$,
$j=1,\dotsc,n$ of $\varphi$ will lie outside $\mbb{D}$. Then if,
say, $\mu$ is a uniform mass distribution in $\Omega$, i.e., $\mu =
\mbox{const} \, dx \, dy$, the Cauchy potential term in
\eqref{eq4.1} for $z\notin\overline{\Omega}$ becomes
\begin{equation}
\label{eq4.2} \sum_{j=1}^n\frac{c_j}{z-z_j}, \quad
z_j=\varphi\(\frac1{\overline{\beta_j}}\),
\end{equation}
where the coefficients $c_j$ are determined by the quadrature formula associated with $\Omega$ (cf. \cite{Sh} for details).

Hence, substituting \eqref{eq4.2} into \eqref{eq4.1} we again obtain that for such thin lens $\Omega$ with a uniform density distribution, the number of ``bright'' images outside $\Omega$ cannot exceed $5n-5$ when no shear is present, or $5n$ otherwise.

In this general context the only previously known (to the best of our knowledge) result is the celebrated Burke's theorem \cite{Bu}

\begin{theorem}\label{thm4.1}
A \tu{(}finite\tu{)} number of images produced by a \textbf{smooth}
mass distribution $\mu$ is always odd, provided that $\gamma=0$
\tu{(}no shear\tu{)}.
\end{theorem}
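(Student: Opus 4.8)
The plan is to run the standard topological argument (this is, in essence, Burke's reasoning): the number of images of a source, counted with the sign of the Jacobian of the lensing map, equals the topological degree of that map; for a smooth lens without shear this degree is $1$, and oddness of the raw count follows immediately.

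First I would set notation. With $\gamma=0$ write the lensing map as
\begin{equation*}
F(z)=z-\overline{\mathcal C\mu(z)},\qquad \mathcal C\mu(z):=\int_\Omega\frac{d\mu(\zeta)}{z-\zeta},
\end{equation*}
so that the images of a source $w$ are exactly the solutions of $F(z)=w$. The hypothesis enters here: because $\mu$ is smooth --- it already suffices that its density $\rho$ be compactly supported and H\"older continuous --- the Cauchy transform $\mathcal C\mu$ is of class $C^1$ on \emph{all} of $\mathbb C$, in particular with no singularity inside $\Omega$, unlike the point-mass case of Section~2. Hence $F\in C^1(\mathbb C,\mathbb C)$, with real Jacobian $J_F=|\partial_zF|^2-|\partial_{\bar z}F|^2$; since $\partial_{\bar z}F=-\overline{\partial_z\mathcal C\mu}$ need not be small where the mass sits, $J_F$ is in general sign-indefinite, which is precisely why the images must be counted with signs. (In time-delay language, $J_F>0$ at the minima and maxima of the Fermat arrival-time functional and $J_F<0$ at its saddles.)

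Next I would examine $F$ at infinity. As $\mu$ is compactly supported with finite total mass $M=\mu(\Omega)$, one has $\mathcal C\mu(z)=M/z+O(|z|^{-2})$, hence $F(z)=z+o(1)$ as $z\to\infty$; so $F$ is proper, every preimage of a fixed $w$ lies in some disk $|z|<R$, and for $R$ large the homotopy $F_t(z)=z-t\,\overline{\mathcal C\mu(z)}$, $t\in[0,1]$, omits the value $w$ on $|z|=R$. Therefore the winding number of $F-w$ along $|z|=R$ equals that of $z-w$, namely $1$, so $\deg F=1$. Degree theory --- equivalently, the argument principle applied to the smooth, non-holomorphic function $F-w$ --- then yields, for a regular value $w$ (i.e.\ $J_F\ne0$ at every preimage, which holds for all $w$ off the caustics),
\begin{equation*}
\sum_{z\,:\,F(z)=w}\operatorname{sign}J_F(z)=\deg F=1.
\end{equation*}
Writing $N_{\pm}$ for the number of images with $\pm J_F>0$, this is $N_+-N_-=1$, so $N=N_++N_-=2N_-+1$ is odd.

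The parts I expect to need care are regularity and genericity. One has to confirm that $\mathcal C\mu$ is genuinely $C^1$ --- this is where smoothness of $\mu$ is used, via continuity of the principal-value singular integral defining $\partial_z\mathcal C\mu$ --- and one has to keep $w$ off the caustics: on a caustic, images merge and the parity can jump, so ``finite number of images'' in the statement must be read as pertaining to a non-caustic source. It is also worth noting that no image escapes to infinity (compact support of $\mu$) and that the center of $\Omega$ does carry an image --- precisely the extra, typically demagnified, central image that is \emph{absent} for a bare point mass. For $n$ point masses the same argument instead runs on the punctured plane $\mathbb C\setminus\{z_1,\dots,z_n\}$, where each singularity contributes $-1$ to the balance, giving $N_+-N_-=1-n$ (consistent with the parity statement in Theorem~\ref{thm3.1}); it is the central image that promotes this to $N_+-N_-=1$, hence to an odd total, for a smooth lens.
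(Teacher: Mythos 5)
Your argument is correct and is essentially the same as the paper's (and Straumann's) proof: you apply the argument principle / degree theory to the harmonic map $z-\overline{\mathcal C\mu(z)}$, use the behavior $z+o(1)$ at infinity to get $n_+-n_-=1$, and conclude $N=2n_-+1$ is odd. The extra care you take with the $C^1$ regularity of the Cauchy transform and with keeping $w$ off the caustics simply fills in details the paper leaves implicit, so no genuinely different route is involved.
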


An elegant complex-analytic proof of Burke's theorem can be found in \cite{St}. The crux of the argument is this. Take $w=0$ and let $n_+$, $n_-$ denote, respectively, the number of sense-preserving and sense-reversing zeros of the lens map in \eqref{eq4.1} ($\gamma=0$).

The argument principle applies to harmonic complex-valued functions
in the same way it does to analytic functions. Since the right-hand
side of \eqref{eq4.1} behaves like $O(z)$ near $\infty$, the
argument principle then yields that $1=n_+-n_-$. Thus, giving us the
total number of zeros $N:=n_++n_-=2n_-+1$, an odd number.

\section{Ellipsoidal lens}

Suppose the lens $\Omega:=\left\{\frac{x^2}{a^2}+\frac{y^2}{b^2}\le
1,a>b>0\right\}$ is an ellipse. First assume the mass density to be
constant, say $1$. Let $c:  c^2=a^2-b^2$ be the focal distance of
$\Omega$. The lens equation \eqref{eq4.1} can be rewritten as
\begin{equation}
\label{eq5.1}
\bar{z}-\frac1\pi\int_\Omega\frac{dA(\zeta)}{z-\zeta}-\gamma z=\bar{w},
\end{equation}
where $dA$ denotes the area measure. Using complex Green's formula
(cf. e.g., \cite{St}), we can rewrite \eqref{eq5.1} for
$z\in\mbb{C}\setminus\overline{\Omega}$ as follows:
\begin{equation}
\label{eq5.2} \bar{z}-\frac1{2\pi i}\int_{\partial\Omega}
\frac{\bar{\zeta}\,d\zeta}{z-\zeta}-\gamma z=\bar{w}.
\end{equation}
As is well-known \cite{Sh}, the (analytic) Schwarz function
$S(\zeta)$ for the ellipse defined by $S(\zeta)=\bar{\zeta}$ on
$\partial\Omega$ can be easily calculated and equals
\begin{equation}
\label{eq5.3}
\begin{gathered}
S(\zeta)=\frac{a^2+b^2}{c^2}\,\zeta
-\frac{2ab}{c^2}\(\zeta-\sqrt{\zeta^2-c^2}\,\) \\
=\frac{a^2+b^2-2ab}{c^2}\,\zeta
+\frac{2ab}{c^2}\(\zeta-\sqrt{\zeta^2-c^2}\,\) \\
=S_1(\zeta)+S_2(\zeta).
\end{gathered}
\end{equation}
Note that $S_1$ is analytic in $\overline{\Omega}$, while $S_2$ is
analytic outside $\Omega$ and $S_2(\infty)=0$. This is, of course,
nothing else but the Plemelj--Sokhotsky decomposition of the Schwarz
function $S(\zeta)$ of $\partial\Omega$. From \eqref{eq5.3} and
Cauchy's theorem we easily deduce that for
$z\in\mbb{C}\setminus\overline{\Omega}$ the lens equation
\eqref{eq5.2} reduces to
\begin{equation}
\label{eq5.4}
\bar{z}+\frac{2ab}{c^2}\(z-\sqrt{z^2-c^2}\,\)-\gamma z=\bar{w}.
\end{equation}
Squaring and simplifying, we arrive from \eqref{eq5.4} at a complex quadratic equation
$$
\[\bar{z}+\(\frac{2ab}{c^2}\,z-\gamma\)z\bar{w}\]^2
=\frac{2a^2b^2}{c^2}\(z^2-c^2\)
$$
which is equivalent to a system of two irreducible real quadratic
equations. Bezout's theorem (cf. \cite{KN1,KN2}, \cite{KS},
\cite{GH}) then implies that \eqref{eq5.1} may only have $4$
solutions $z\notin\Omega$. For $z\in\Omega$, using Green's formula
and \eqref{eq5.3} we can rewrite the area integral in \eqref{eq5.1}
\begin{equation}
\label{eq5.5}
\begin{gathered}
-\frac1\pi\int_\Omega\frac{dA(\zeta)}{z-\zeta} =-\bar{z}+\frac1{2\pi
i}\int_{\partial\Omega}
\frac{\bar{\zeta}\,d\zeta}{\zeta-z} \\
=-\bar{z}+\frac1{2\pi i}\int_{\partial\Omega}
\frac{\[S_1(\zeta)+S_2(\zeta)\]}{\zeta-z} \\
=-\bar{z}+S_1(z)=-\bar{z}
+\frac{a^2+b^2-2ab}{c^2}\,z
\end{gathered}
\end{equation}
We have used here that the Cauchy transform of $S_2\mid_{\partial\Omega}$ vanishes in $\Omega$ since $S_2$ is analytic in $\overline{\mbb{C}\setminus\Omega}$ and vanishes at infinity. Substituting \eqref{eq5.5} into \eqref{eq5.1}, we arrive at a linear equation
\begin{equation}
\label{eq5.6}
\(\frac{a^2+b^2-2ab}{c^2}-\gamma\)z=\bar{w}
\end{equation}
for $z\in\Omega$. Equation \eqref{eq5.6}, of course, may only have one root in $\Omega$. Thus, we have proved the following

\begin{theorem}\label{thm5.1}
An elliptic lens $\Omega$ \tu{(}say, a galaxy\tu{)} with a uniform mass density may produce at most four ``bright'' lensing images of a point light source outside $\Omega$ and one \tu{(}``dim''\tu{)} image inside $\Omega$, i.e., at most $5$ lensing images altogether.
\end{theorem}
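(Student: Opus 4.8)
The plan is to analyze the lens equation \eqref{eq5.1} separately in the two regions $\mbb{C}\setminus\overline{\Omega}$ and $\Omega$, in each case collapsing the area potential to an explicit elementary expression by means of the Schwarz function of the ellipse, and then counting the zeros of the resulting polynomial equation.

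First I would convert the Cauchy potential $\frac{1}{\pi}\int_\Omega\frac{dA(\zeta)}{z-\zeta}$ into a contour integral over $\partial\Omega$ via the complex Green's (Cauchy--Pompeiu) formula, so that on $\partial\Omega$ the integrand carries $\bar\zeta$, which there equals the Schwarz function value $S(\zeta)$. The essential input is the closed form \eqref{eq5.3} for $S$, obtained by solving the ellipse equation for $\bar\zeta$ in terms of $\zeta$ (equivalently via the parametrization $\zeta=a\cos t+ib\sin t$), together with its Plemelj--Sokhotsky splitting $S=S_1+S_2$, where $S_1$ is holomorphic on $\overline{\Omega}$ and $S_2$ is holomorphic on $\overline{\mbb{C}\setminus\Omega}$ with $S_2(\infty)=0$. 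At this stage one must pin down the branch of $\sqrt{\zeta^2-c^2}$ that makes $S_2$ single-valued on the exterior and vanishing at infinity, so that the splitting is the genuine Plemelj--Sokhotsky decomposition.

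Next, for $z\notin\overline{\Omega}$, I would apply Cauchy's theorem to kill the contour integral of $S_1(\zeta)/(\zeta-z)$ and to evaluate that of $S_2(\zeta)/(\zeta-z)$ as $-S_2(z)$, reducing \eqref{eq5.1} to the two-term relation \eqref{eq5.4}. Isolating the radical and squaring then produces a single complex equation $Q(z,\bar z)=0$ with $Q$ of total degree $2$ in $z$ and $\bar z$; taking real and imaginary parts (with $z=x+iy$) gives two real conics in $(x,y)$. Since every solution of \eqref{eq5.4} satisfies this squared system, Bezout's theorem bounds the number of exterior images by $4$, provided one checks that the two conics share no common component — which one reads off from the coefficients, the combination $\bar z+(\frac{2ab}{c^2}-\gamma)z-\bar w$ and the quadratic in $z^2-c^2$ being manifestly not proportional. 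For $z\in\Omega$ the roles of $S_1$ and $S_2$ reverse: the exterior-holomorphic $S_2$ integrates to zero over $\partial\Omega$ against $1/(\zeta-z)$, while $S_1(\zeta)/(\zeta-z)$ contributes $S_1(z)=\frac{a^2+b^2-2ab}{c^2}\,z$, so \eqref{eq5.1} becomes the affine equation \eqref{eq5.6} in $z$ alone, with at most one root in $\Omega$ (exactly one unless $\frac{(a-b)^2}{c^2}=\gamma$). Adding the two counts, and noting that the real-analytic curve $\partial\Omega$ carries no additional images (a boundary solution is a one-sided limit already accounted for), yields the bound $5$.

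I expect the delicate point to be the squaring step on the exterior: one has to be certain that passing from \eqref{eq5.4} to $Q(z,\bar z)=0$ does not collapse the two real conics onto a common curve, which would void the Bezout count, and one must keep the branch of the square root fixed so that \eqref{eq5.4} is genuinely equivalent on $\mbb{C}\setminus\overline{\Omega}$ to the original Cauchy-potential form of the lens equation. The remaining ingredients — deriving $S$, splitting it, and evaluating the two Cauchy integrals — are routine given the Schwarz-function calculus of \cite{Sh}.
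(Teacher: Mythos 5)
Your proposal is correct and follows essentially the same route as the paper: reduce the exterior equation via the complex Green's formula and the Plemelj--Sokhotsky splitting of the Schwarz function to \eqref{eq5.4}, square and invoke Bezout's theorem for at most four exterior images, and observe that inside $\Omega$ only the analytic part $S_1$ survives, giving the linear equation \eqref{eq5.6} and at most one dim image. Your extra care about the branch of $\sqrt{z^2-c^2}$ and the irreducibility of the two real conics matches the paper's (implicit) assertion that the squared system consists of two irreducible quadrics.
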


This type of result has actually been observed experimentally - cf.
Fig. 4, where four bright images are clearly present. It is
conceivable that the dim image is also there but we can't see it
because it is perhaps too faint compared with the galaxy. Of course,
one has to accept Fig. 4 with a grain of salt since we do not expect
``real'' galaxies to have uniform densities. A model of an
elliptical lens, with shear, that produces five images ($4 \,
\mbox{bright} + 1 \, \mbox{dim}$) is given in Fig. 5.

\begin{figure}[H]
\begin{center}
\includegraphics*[scale=.5]{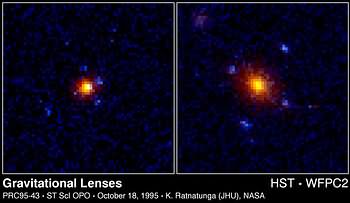}
\end{center}
\caption{Four images of a light source behind the elliptical galaxy.
(Credit: NASA, Kavan Ratnatunga (Johns Hopkins University).)}
\end{figure}

\begin{figure}[H]
\begin{center}
\includegraphics*[scale=.5]{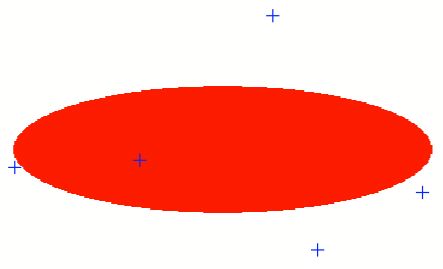}
\end{center}
\caption{A model with five images of a source behind an elliptical
lens with axis ratio $0.5$ and uniform density $2$.}
\end{figure}

We can extend the previous theorem for a larger class of mass
densities. Denote by $q(x,y):=\frac{x^2}{a^2}+\frac{y^2}{b^2}-1$ the
equation of $\Gamma:=\partial\Omega$. Let
$q_\lambda(x,y)=\frac{x^2}{a^2+\lambda}+\frac{y^2}{b^2+\lambda}-1$,
$-b^2<\lambda<0$ stand for the equation of the boundary
$\Gamma_\lambda:=\partial\Omega_\lambda$ of the ellipse
$\Omega_\lambda$ confocal with $\Omega$.

The celebrated MacLaurin's theorem (cf. \cite{Kh}) yields that for
any $z\in\mbb{C}\setminus\overline{\Omega}$
\begin{equation}
\label{eq5.7}
\frac1{\Area\(\Omega_\lambda\)}\int_{\Omega_\lambda}\frac{dA(\zeta)}{\zeta-z}
=\frac1{\Area(\Omega)}\int_\Omega\frac{dA(\zeta)}{\zeta-z}.
\end{equation}
Thus, if we denote by $u(z,\lambda)$ the Cauchy potential of $\Omega_\lambda$ evaluated at $z\in\mbb{C}\mid\overline{\Omega}$ we obtain from \eqref{eq5.7}
\begin{equation}
\label{eq5.8} u\(z, \lambda\)=c(\lambda) u_\Omega(z,0),
\end{equation}
where
\begin{equation}
\label{eq5.9}
c(\lambda)=\frac{\Area\(\Omega_\lambda\)}{\Area(\Omega)}
=\frac{\(a^2+\lambda\)^{1/2}\(b^2+\lambda\)^{1/2}}{ab}.
\end{equation}
Hence,
\begin{equation}
\label{eq5.10}
\frac{\partial u_\lambda(z,\lambda)}{\partial\lambda}
=c'(\lambda) u_\Omega(z).
\end{equation}
So, if the mass density $\mu(\lambda)$ in $\Omega$ only depends on
the elliptic coordinate $\lambda$, i. e., is constant on ellipses
confocal with $\Omega$ inside $\Omega$, its potential outside
$\Omega$ equals
\begin{equation}
\label{eq5.11} u_{\mu,\Omega}(z)=c u_\Omega(z).
\end{equation}
The constant $c$ is easily calculated from
\eqref{eq5.9}--\eqref{eq5.10} and equals
\begin{equation}
\label{eq5.12} c=\int_{-b^2}^0\mu(\lambda)c'(\lambda)\,d\lambda.
\end{equation}
It is, of course, natural for physical reasons to assume that
$\mu(\lambda)\uparrow\infty$ at the ``core'' of $\Omega$ (i.e., when
$\lambda\downarrow-b^2$), the focal segment $[-c,c]$. Yet, from
\eqref{eq5.12} since \eqref{eq5.9} yields
$c'(\lambda)=O\(\(b^2+\lambda\)^{-1/2}\)$ near $\lambda_0=-b^2$, it
follows that $\mu(\lambda)$ should not diverge at the core faster
than say $O\(\(b^2+\lambda\)^{-1/2+\epsilon}\)$ for some positive
$\epsilon$, so the integral \eqref{eq5.12} converges. Substituting
\eqref{eq5.11} into the lens equation \eqref{eq5.1} with constant
density replaced by the density $\mu(\lambda)$ and following again
the steps in \eqref{eq5.2}--\eqref{eq5.4} we arrive at the following
corollary.

\begin{corollary}\label{cor5.1}
An elliptic lens $\Omega$ with mass density that is constant inside
$\Omega$ on the ellipses confocal with $\Omega$  may produce at most
four ``bright'' lensing images of a point light source outside
$\Omega$.
\end{corollary}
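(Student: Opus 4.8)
The plan is to reduce the exterior problem to the constant-density case already settled in Theorem~\ref{thm5.1}. The crucial point, recorded in \eqref{eq5.11}, is that MacLaurin's theorem \eqref{eq5.7} forces the Cauchy potential of a density $\mu(\lambda)$ that is constant on confocal ellipses to coincide, \emph{outside} $\overline{\Omega}$, with a constant multiple of the Cauchy potential of the uniform density on $\Omega$. Write $\kappa$ for the constant $c$ appearing in \eqref{eq5.11}--\eqref{eq5.12} (renamed here only to avoid a clash with the focal distance $c$): by \eqref{eq5.12} we have $\kappa=\int_{-b^2}^0\mu(\lambda)\,c'(\lambda)\,d\lambda$, which is finite under the mild growth restriction on $\mu$ near the focal segment discussed above, and which is strictly positive because $c'(\lambda)>0$ by \eqref{eq5.9}. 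Hence for $z\in\mbb C\setminus\overline{\Omega}$ the lens equation \eqref{eq5.1}, with the constant density replaced by $\mu(\lambda)$, becomes --- upon running the computation \eqref{eq5.2}--\eqref{eq5.4} with $-\frac1\pi\int_\Omega\frac{dA(\zeta)}{z-\zeta}$ replaced by $\kappa$ times itself ---
$$
\bar z+\kappa\,\frac{2ab}{c^2}\(z-\sqrt{z^2-c^2}\,\)-\gamma z=\bar w,
$$
that is, exactly \eqref{eq5.4} with the coefficient $2ab/c^2$ rescaled by the positive constant $\kappa$.

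From here I would simply repeat the endgame of the proof of Theorem~\ref{thm5.1}. Isolating the radical and squaring produces a complex quadratic identity in $z$ and $\bar z$, equivalent to a system of two real quadratic equations in $x,y$. Since $\kappa\neq0$ neither equation degenerates, and --- exactly as in the uniform case --- the two conics share no common component; Bezout's theorem then bounds the number of their common solutions by $4$. Squaring can only create extraneous roots, so the number of genuine solutions $z\notin\overline{\Omega}$ of the lens equation is at most $4$, which is the assertion of the corollary.

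Two comments. First on scope: the corollary makes no claim about images \emph{inside} $\Omega$, and deliberately so --- the MacLaurin identity \eqref{eq5.7} holds only at points exterior to all the confocal ellipses $\Omega_\lambda$, so the interior potential of a $\mu(\lambda)$-density is generally not a constant multiple of the analytic piece $S_1$, and the clean linear equation \eqref{eq5.6} that produced the single interior image in Theorem~\ref{thm5.1} is no longer available. Second, on where the one bit of genuine work sits: beyond invoking \eqref{eq5.11}, the only thing to check is the non-degeneracy of the two real conics obtained after squaring, so that Bezout yields the finite bound $4$ rather than a shared component --- but this is the same verification that already underlies Theorem~\ref{thm5.1}, and multiplying one coefficient by $\kappa>0$ does not disturb it. I therefore expect the proof to be essentially a transcription of the constant-density argument, with the single new ingredient being the scalar $\kappa$ supplied by MacLaurin's theorem.
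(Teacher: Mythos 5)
Your argument is exactly the paper's: invoke MacLaurin's theorem via \eqref{eq5.11}--\eqref{eq5.12} to replace the exterior Cauchy potential by a positive constant multiple of the uniform one, then rerun \eqref{eq5.2}--\eqref{eq5.4} and the Bezout count to get at most four exterior images. Your remarks on the finiteness and positivity of the constant and on why the interior image is excluded are consistent with the paper's (very brief) treatment.
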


\section{Einstein rings}

For a one-point mass at $z_1$ lens with the source at $w=0$ the lens
equation \eqref{eq2.1} without shear becomes
\begin{equation}
\label{eq6.1}
z-\frac c{\bar{z}-\bar{z}_1}=0.
\end{equation}
As was already noted by Einstein (cf. \cite{St,Wa,NB} and references
cited therein), \eqref{eq6.1} may have two solutions (images) when
$z_1\ne 0$ and a whole circle (``Einstein ring'') of solutions when
$z_1=0$, in other words when the light source, the lens and the
observer coalesce - cf. \, Fig. 6 and Fig. 7.

\begin{figure}[H]
\begin{center}
\includegraphics*[scale=.5]{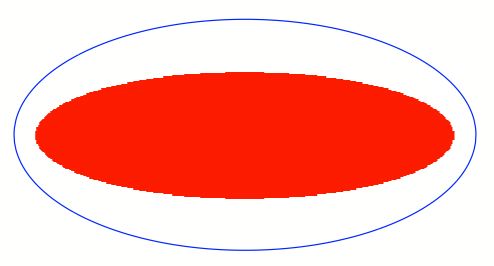}
\end{center}
\caption{A model of an elliptical Einstein ring surrounding an
elliptical lens with axis ratio $0.5$ and uniform density $2$.
The shear in this case must be specially chosen to produce the ring
instead of point images. Note that the ring is an ellipse confocal
with the lens - cf. Thm. 6.1 .}
\end{figure}

\begin{figure}[H]
\begin{center}
\includegraphics*[scale=.1]{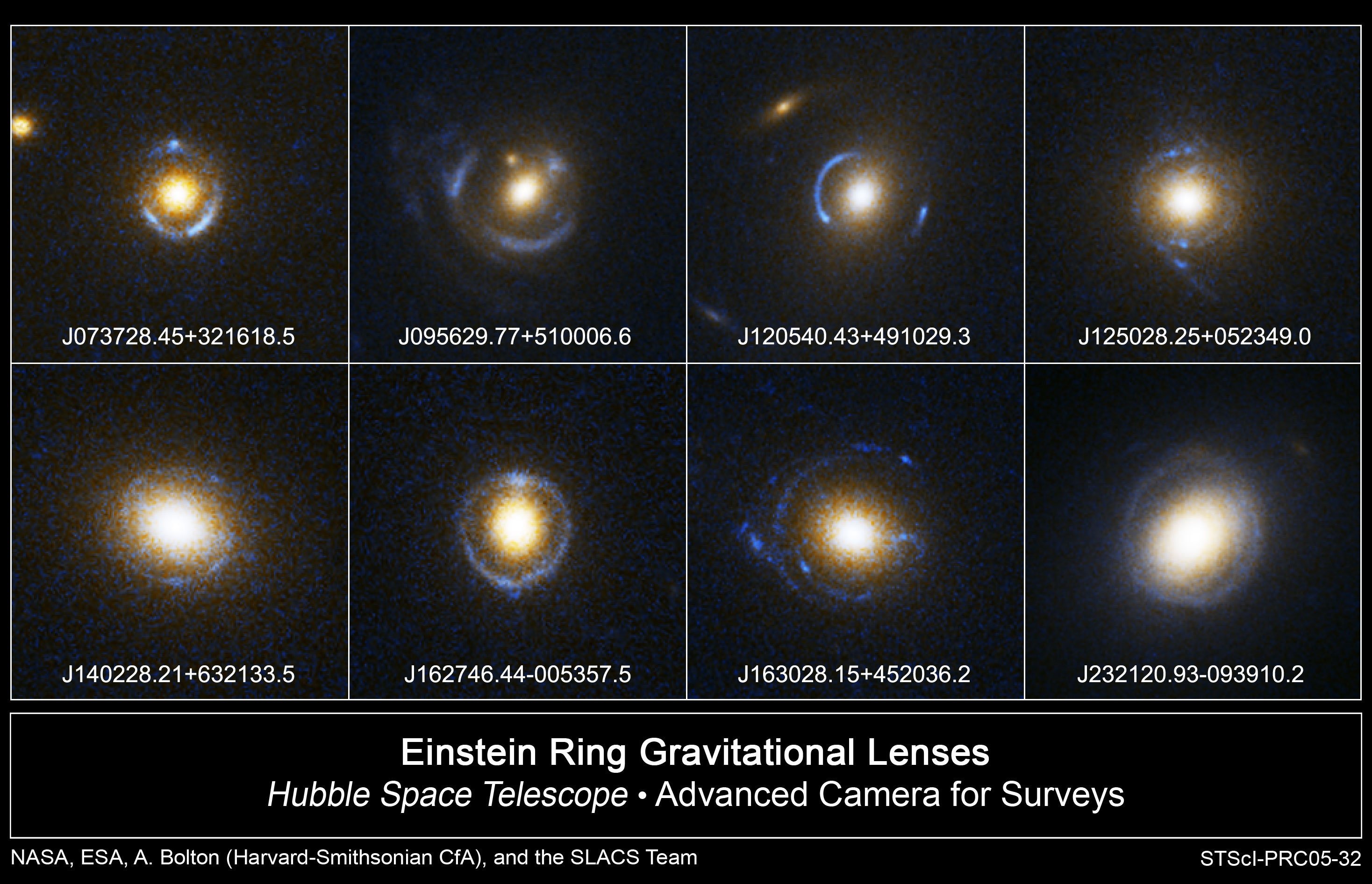}
\end{center}
\caption{Einstein rings. The sources in these observed ``realistic''
lenses are actually extended, and that is why we see sometimes arcs
rather than whole rings. (Credit: ESA, NASA and the SLACS survey
team: A. Bolton (Harvard / Smithsonian), S. Burles (MIT), L.
Koopmans (Kapteyn), T. Treu (UCSB), and L. Moustakas (JPL/Caltech).)
}
\end{figure}

As the following simple theorem shows the ``ideal'' Einstein rings
are limited to ellipses and circles  in much more general
circumstances.

\begin{theorem}\label{thm6.1}
Let $\Omega$ be any planar \tu{(}``thin''\tu{)} lens with mass
distribution $\mu_e$. If lensing of a point source produces a
bounded ``image'' curve outside of the lens $\Omega$, it must either
be a circle when the external shear $\gamma=0$ or an ellipse.
\end{theorem}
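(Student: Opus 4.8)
The plan is to read the lens equation along the image curve as a Schwarz-function identity and then to invoke the Dive--Nikliborc rigidity theorem --- the converse of Newton's theorem on the potentials of ellipsoids --- to force the curve to be a conic.

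\textbf{Reduction to a Schwarz function.} Write $U(z):=\int_\Omega\frac{d\mu_e(\zeta)}{z-\zeta}$ for the Cauchy potential of the lens mass; it is holomorphic on $\mbb{C}\setminus\overline{\Omega}$ (in fact off $\supp\mu_e$) and $U(z)=M/z+O(1/z^2)$ near $\infty$, where $M:=\mu_e(\Omega)$. Taking conjugates in the lens equation and putting it in the form \eqref{eq5.1}, at any point $z$ outside $\Omega$ we obtain
$$\bar z=\gamma z+\bar w+U(z)=:\Psi(z),$$
where $\Psi$ is holomorphic throughout $\mbb{C}\setminus\overline{\Omega}$ and $\Psi(z)=\gamma z+\bar w+O(1/z)$ near $\infty$. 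If the image locus $\Gamma$ is a curve, this identity holds identically along $\Gamma$, so $\Gamma$ is real-analytic and $\Psi$ is its Schwarz function; hence $S_\Gamma$ continues holomorphically to the whole complement of the lens, with at most a simple pole at $\infty$, and with no pole there when $\gamma=0$. One also checks that $\Gamma$ must surround $\Omega$: taking $\Gamma$ a Jordan curve, the function $z\Psi(z)=\gamma z^2+\bar wz+zU(z)$ equals $|z|^2$ on $\Gamma$, so if the interior of $\Gamma$ avoided $\overline\Omega$ it would be holomorphic on that closed interior and real on $\Gamma$, hence constant; this forces $U(z)=\const/z-\gamma z-\bar w$, which contradicts $U(\infty)=0$ unless $\gamma=0$ and $w=0$, and then $U=\const/z$ has a pole inside $\Gamma$ where $\Psi$ must be regular --- impossible. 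Thus $\overline\Omega$ lies inside $\Gamma$, and $S_\Gamma$ is holomorphic in the \emph{exterior} of $\Gamma$, with the behaviour at $\infty$ recorded above.

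\textbf{The rigidity step.} Let $\phi(w)=\rho w+a_0+a_1w^{-1}+a_2w^{-2}+\cdots$ map $\{|w|>1\}$ conformally onto the exterior of $\Gamma$ with $\phi(\infty)=\infty$. On $|w|=1$ one has $S_\Gamma(\phi(w))=\overline{\phi(w)}=\bar\rho\,w^{-1}+\sum_{n\ge0}\bar a_nw^{n}$; the left-hand side continues holomorphically to $|w|>1$ (with at most a simple pole at $w=\infty$), the right-hand side continues to $0<|w|<1$ (with a simple pole at $w=0$), and gluing the two across the unit circle produces a function on the Riemann sphere whose only singularities are simple poles at $0$ and $\infty$, hence of the form $\alpha w+\beta+\delta w^{-1}$. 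Comparing Laurent coefficients forces $a_n=0$ for all $n\ge2$, so $\phi(w)=\rho w+a_0+a_1w^{-1}$ and $\Gamma=\phi(\{|w|=1\})$ is an ellipse with centre $a_0$ and semi-axes $|\rho|\pm|a_1|$, degenerating to a circle exactly when $a_1=0$. If $\gamma=0$ then $S_\Gamma\circ\phi$ has no pole at $\infty$, which forces $\alpha=\bar a_1=0$, so $\Gamma$ is a circle.

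\textbf{Where the difficulty lies, and a by-product.} Everything up to the rigidity step is routine; the real content is that a bounded curve whose Schwarz function extends holomorphically to its exterior, with at most a simple pole at infinity, must be an ellipse (a circle when the pole is absent). This is the complex-analytic form of the Dive--Nikliborc converse of Newton's theorem (cf.\ \cite{Kh}, \cite{Sh}): among unbounded domains with bounded complement, the exterior of an ellipse --- resp.\ of a disk --- is the only one with this Schwarz-function (quadrature) property. The argument actually gives a little more than the statement: the critical points $w=\pm(a_1/\rho)^{1/2}$ of $\phi$ are carried to the foci of $\Gamma$, so matching these with the singular set of $U$ shows that when $\Omega$ is itself a uniformly dense ellipse the ring is confocal with $\Omega$ (as in Fig.\ 6); and in the shear-free case $U(z)=M/(z-w)$, so a genuine Einstein ring forms only when the lens presents the exterior gravitational field of a single point mass sitting exactly at the source.
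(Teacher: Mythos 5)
Your argument is correct, and it reaches the ellipse by a genuinely different route at the decisive step. You and the paper agree on the reduction: along the image loop the lens map says that $\bar z$ coincides with a function ($\gamma z+\bar w+$ the Cauchy potential of $\mu_e$) holomorphic in all of $\mbb{C}\setminus\overline{\Omega}$, i.e.\ the Schwarz function of $\Gamma$ extends to the whole exterior with at most a simple pole at $\infty$. From there the paper goes \emph{inside}: it uses Green's formula to show that the logarithmic potential of the interior domain $\tilde\Omega$ bounded by $\Gamma$ is a quadratic polynomial in $\tilde\Omega$, and then invokes the Dive--Nikliborc converse of Newton's theorem; this buys a separate one-line treatment of $\gamma=0$ (the ring is a circle centered at the source) and, as Remark \ref{rem6.1} notes, an argument that transfers verbatim to higher dimensions. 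You instead stay \emph{outside}: with the exterior conformal map $\phi(w)=\rho w+a_0+a_1w^{-1}+\cdots$ you glue $S_\Gamma\circ\phi$ (holomorphic for $|w|>1$, at most a simple pole at $\infty$) with $\overline{\phi(1/\bar w)}$ (holomorphic for $0<|w|<1$, simple pole at $0$) across the unit circle, conclude the glued function is $\alpha w+\beta+\delta w^{-1}$, and read off $a_n=0$ for $n\ge 2$, hence $\Gamma$ is an ellipse, a circle exactly when $\gamma=0$ (since $\bar a_1=\gamma\rho$). This is the classical Schwarz-function/quadrature-domain characterization of ellipses (it is in \cite{Sh}, and the paper explicitly acknowledges this route as ``well-known'' before offering its potential-theoretic proof); it is self-contained, purely two-dimensional, and gives extra geometric output for free (the foci as images of the critical points of $\phi$, and in the shear-free case the identification of the center with the source and of the exterior field with that of a point mass), but unlike Dive--Nikliborc it does not generalize to higher dimensions. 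Two small points you should tighten, though both are at the same level of informality as the paper's own appeal to \cite{Wil}: the gluing across $|w|=1$ needs a sentence (continuity of $\phi$ up to the circle plus Morera, or the real-analyticity of $\Gamma$ you already noted), and your argument that $\Gamma$ must surround $\Omega$ propagates the identity $z\Psi(z)=\const$ out to $\infty$, which tacitly assumes the loop lies in the unbounded component of $\mbb{C}\setminus\overline{\Omega}$; alternatively, observe that if no mass lay inside $\Gamma$ then both real and imaginary parts of $\bar z-\Psi(z)$ would be harmonic inside the loop and vanish on it, forcing $\bar z\equiv\Psi(z)$ on an open set, which is impossible.
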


\begin{proof}
First consider a simpler case when $\gamma=0$. If the lens produces
an image curve $\Gamma$ outside $\Omega$, the lens equation
\eqref{eq4.1} becomes
\begin{equation}
\label{eq6.2}
\bar{z}-\bar{w}=\int_\Omega\frac{d\mu(\zeta)}{z-\zeta},
\end{equation}
for all $z\in\Gamma$. Note that $\Gamma$ being bounded and also
being a level curve of a harmonic function must contain a closed
loop surrounding $\Omega$ \cite{Wil}. Without loss of generality, we
still denote that loop by $\Gamma$. The right-hand side $f(z)$ of
\eqref{eq6.2} is a bounded analytic function in the unbounded
complement component $\tilde{\Omega}_\infty$ of $\Gamma$ that
vanishes at infinity. Hence $(z-w)f(z)$ is still a bounded and
analytic function in $\tilde{\Omega}_\infty$ equal to $|z-w|^2>0$ on
$\Gamma:=\partial\tilde{\Omega}_\infty$. Hence $(z-w)f(z)=\const$
and $\Gamma$ must be a circle centered at $w$.

Now suppose $\gamma\ne 0$. Once again we shall still denote by
$\Gamma$  a closed Jordan loop surrounding $\Omega$. Denote by
$\tilde{\Omega}$ the interior of $\Gamma$,
$\tilde{\Omega}_\infty=\mbb{C}\setminus\clos\(\tilde{\Omega}\)$.
Also, by translating we can assume that the position of the source
$w$ is at the origin.

The equation \eqref{eq4.1} now reads
\begin{equation}
\label{eq6.3}
\bar{z}=\int_\Omega\frac{d\mu(\zeta)}{z-\zeta}+\gamma z,
\quad z\in\Gamma.
\end{equation}
In other words the right-hand side of \eqref{eq6.3} represents the Schwarz function $S$ of $\Gamma$, analytic in $\mbb{C}\setminus\supp\mu$ with a simple pole at $\infty$. It is well-known that this already implies that $\Gamma$ must be an ellipse (cf. \cite{Sh,GS}) and references therein. For the reader's convenience we supply a simple proof.

Applying Green's formula to \eqref{eq6.3} yields (cf.\  \eqref{eq5.1}--\eqref{eq5.2}) that for all $z\in\tilde{\Omega}_\infty$
\begin{equation}
\label{eq6.4} \int_\Omega\frac{d\mu(\zeta)}{z-\zeta}
=\frac1\pi\int_{\tilde\Omega}\frac{dA(\zeta)}{z-\zeta}, \quad
z\in\tilde{\Omega}:=\mbb{C}\setminus\tilde{\Omega}_\infty.
\end{equation}
Let
\begin{equation}
\label{eq6.5}
h(z):=\frac1\pi\int_{\tilde\Omega}\frac{dA(\zeta)}{z-\zeta}-\bar{z},\quad
z\in\tilde{\Omega}.
\end{equation}
Then, $h(z)$ is analytic in $\tilde\Omega$ (cf. \eqref{eq5.2}) and, in view of \eqref{eq6.3} and \eqref{eq6.4}
\begin{equation}
\label{eq6.6}
h\mid_\Gamma=\int_\Omega
\left.\frac{d\mu(\zeta)}{z-\zeta}\right|_\Gamma
-\left.\bar{z}\right|_\Gamma=\gamma z\mid_\Gamma.
\end{equation}
Thus, $h(z)$ is a linear function and since \eqref{eq6.5} for $z\in\tilde{\Omega}$
\begin{equation}
\label{eq6.7}
h(z):=\frac{1}{2}\grad
\[\frac1\pi\int_{\tilde\Omega}\log|z-\zeta|\,dA(\zeta)-|z|^2\]
\end{equation}
we conclude from \eqref{eq6.7} that the potential of $\tilde\Omega$
$$
u_{\tilde\Omega}(z)=\frac1{2\pi}\int_{\tilde\Omega}\log|z-\zeta|\,dA(\zeta),
\quad z\in\tilde{\Omega}
$$
equals to a quadratic polynomial inside $\tilde{\Omega}$. The
converse of the celebrated theorem of Newton due to P. Dive and N.
Nikliborc (cf. \cite[Ch. 13--14]{Kh} and references therein) now
yields that $\tilde{\Omega}$ must be an interior of an ellipse,
hence $\Gamma:=\partial\tilde{\Omega}=\partial\tilde{\Omega}_\infty$
is an ellipse.
\end{proof}

\begin{remark}\label{rem6.1}
One immediately observes that since the converse to Newton's theorem holds in all dimensions the last theorem at once extends to higher dimensions if one replaces the words ``image curve'' by ``image surface''.
\end{remark}

\section{Final remarks}

\begin{enumerate}
\item  The densities considered in \S5 are less important from the physical viewpoint than so-called
``isothermal density'' which is obtained by projecting onto the lens
plane the ``realistic'' three-dimensional density $\sim 1/\rho^2$,
where $\rho$ is the (three-dimensional ) distance from the origin.
This two-dimensional density could be included into the whole class
of densities that are constant on all ellipses \textit{homothetic}
rather than confocal with the given one. The reason for the term
``isothermal'' is that when a three-dimensional galaxy has density
$\sim 1/\rho^2$ the gas in the galaxy has constant temperature (cf.
\cite{KMW} and the references therein).

Recall that the Cauchy potential of the ellipse $\Omega:=\left\{\frac{x^2}{a^2}+\frac{y^2}{b^2}\le 1,a>b>0\right\}$
outside of $\Omega$ (cf. \eqref{eq5.2}--\eqref{eq5.4}) equals
\begin{equation}
\label{eq7.1}
u_0(z):=k\(z-\sqrt{z^2-c^2}\,\),
\quad z\in\mbb{C}\setminus\overline{\Omega},
\quad c^2=a^2-b^2,
\end{equation}
where $k=2ab/c^2$ is a constant. Replacing the ellipse $\Omega$ by a
homothetic ellipse
$\Omega_t:=t\Omega:=\frac{x^2}{a^2}+\frac{y^2}{b^2}\le t^2$,
\,$0<t<1$. We obtain using \eqref{eq7.1} for
$z\notin\overline{\Omega}$:
\begin{equation}
\label{eq7.2}
\begin{split}
u(z,t): &=\int_{t\Omega}\frac{dA(\zeta)}{\zeta-z}
=t^2\int_\Omega\frac{dA(\zeta)}{t\zeta-z} \\
&=tu\(\frac zt;1\)=k\(z-\sqrt{z^2-c^2t^2}\,\).
\end{split}
\end{equation}
Thus,
\begin{equation}
\label{eq7.3}
\frac{\partial u(z,t)}{\partial t}=k\,\frac{c^2t}{\sqrt{z^2-c^2t^2}}.
\end{equation}

So, if the ``isothermal'' density $\mu=\frac1t$ on
$\partial\Omega_t$ inside $\Omega$ (ignoring constants), we get from
\eqref{eq7.3} that the Cauchy potential of such mass distribution
outside $\Omega$ equals
\begin{equation}
\label{eq7.4} u_\mu(z):=C_0\int_0^1\frac{dt}{\sqrt{z^2-c^2t^2}},
\end{equation}
where the constant $C_0$ depends on $\Omega$ only. This is a
transcendental function (one of the branches of $\arcsin\frac cz$),
dramatically different from the algebraic potential in
\eqref{eq7.1}. The lens equation \eqref{eq4.1}now becomes
\begin{equation}
\label{eq7.5}
z-C_0\int_0^1\frac{dt}{\sqrt{\bar{z}^2-c^2t^2}}-\gamma\bar{z}=w.
\end{equation}

To the best of our knowledge the precise bound on the maximal
possible number of solutions (images) of \eqref{eq7.5} is not known.
Up to today, no more than $5$ images ($4$ bright $+1$ dim) have been
observed. However, in \cite{KMW} there have been constructed
explicit models depending on parameters $a$, $b$ and $0<\gamma<1$
having $9$ (i. e., $8+1$) images. The equation \eqref{eq7.5}
essentially differs from all the lens equations considered in this
paper since it involves estimating the number of zeros of a
transcendental harmonic function with a simple pole at $\infty$. At
this point, we are even reluctant to make a conjecture regarding
what this maximal number might be.

Note, that in case of a circle $\Omega=\left\{x^2+y^2<1\right\}$
with any radial density $\mu:=\varphi(r)$, $r=\sqrt{x^2+y^2}<1$, the
situation is very simple. The Cauchy potential $u(z)$ outside
$\Omega$, as was noted earlier, equals
\begin{equation}
\label{eq7.6}
\frac cz,\quad |z|>1,
\end{equation}
where $c$ is a constant. Hence, outside $\Omega$ the lens equation becomes
\begin{equation}
\label{eq7.7}
z-\frac c{\bar{z}}-\gamma\bar{z}=w,
\end{equation}
a well-known Chang--Refsdal lens (cf., e.g., \cite{AE}) that may
have at most $4$ solutions except for the degenerate case
$\gamma=w=0$, when the Einstein ring appears. In particular, when
$\gamma=0,\, w\neq 0$, such mass distribution may only produce two
bright images outside $\Omega$. For $z:|z|<1$ inside the lens the
potential is still calculated by switching to polar coordinates:
\begin{equation}
\label{eq7.8}
\begin{gathered}
u(z):=\int_0^1\int_0^{2\pi}\frac{\varphi(r)rdrd\theta}{re^{i\theta}-z} \\
=\int_{|z|}^1\varphi(r)dr\int_0^{2\pi}\frac{rd\theta}{re^{i\theta}-z}
+\int_0^{|z|}\varphi(r)rdr\int_0^{2\pi}\frac{d\theta}{re^{i\theta}-z} \\
=\int_{|z|}^1\varphi(r)dr\int_0^{2\pi}\(\sum_0^\infty\(\frac zr\)e^{-i(n+1)\theta}\)d\theta \\
+\frac1z\int_0^{|z|}\varphi(r)rdr\int_0^{2\pi}\(\sum_0^\infty\(\frac{re^{i\theta}}z\)^n\right. d\theta \\
=\frac{2\pi}z\int_0^{|z|}\varphi(r)rdr.
\end{gathered}
\end{equation}
In particular, for the ``isothermal'' density $\varphi(r)\sim\frac1r$, \eqref{eq7.8} yields for $z:|z|<1$
$$
u(z)=\frac{2\pi}z\,|z|,
$$
so the lens equation \eqref{eq7.7} becomes
\begin{equation}
\label{eq7.9}
\bar{z}-\frac cz\,|z|-\gamma z=\bar{w},
\end{equation}
where $c$ is a real constant. Equation \eqref{eq7.9} can have at
most two solutions \textit{inside} $\Omega$ (only one, if
$\gamma=0$), again, excluding the degenerate case of the Einstein
ring. Furthermore, since Burke's theorem allows only an odd number
of images, the total maximal number of images for an isothermal
sphere cannot exceed $5$ ($4\, \mbox{bright} + 1 \, \mbox{dim}$) as
before (or $\le 3, \,\mbox{i. e.,}(2+1)$ if $\gamma=0$). Note, that
strictly speaking, Burke's theorem cannot be applied to the
isothermal density because of the singularity at the origin. Yet,
since the density is radial and smooth everywhere excluding the
origin and because it is clear from \eqref{eq7.9} that the origin
cannot be a solution, Burke's theorem does apply yielding the above
conclusion.

\item  The problem of estimating the maximal number of ``dim'' images inside the lens formed by a uniform mass-distribution inside a
quadrature domain $\Omega$ (cf. \S4) of order $n$ is challenging. In
this case the Cauchy potential in \eqref{eq4.1} inside $\Omega$
equals to the ``analytic'' part of the Schwarz function $S(z)$. It
is known that $S(z)$ is an algebraic function of degree at most
$2n$. Yet, the sharp bounds, similar to those in Theorem
\ref{thm3.1}, for the number of zeros of harmonic functions of the
form $\bar{z}-a(z)$, where $a(z)$ is an algebraic function, are not
known.

\item  Another interesting and difficult problem would be to study the maximal number of images by a lens
consisting of several elliptical mass distributions. Some rough
estimates based on Bezout's theorem can be made by imitating the
calculations in \S5. Yet, even for $2$ uniformly distributed masses
these calculations give a rather large possible number of images
($\le 15$) while, so far, only $5$ images by a two galaxies lens and
$6$ images by a three galaxies lens have been observed - cf.
\cite{KW,Win}.
\end{enumerate}

\bibliographystyle{plain}
\bibliography{g-lensing}

\end{document}